\documentclass[11pt]{article}
\usepackage{microtype}
\usepackage{tikz}
\usepackage[margin=1in]{geometry}
\usepackage{amssymb, amsmath, mathtools}
\usepackage{amsthm}
\usepackage[utf8]{inputenc}
\usepackage{csquotes}
\usepackage[labelfont=bf, skip=6pt]{caption}
\usepackage[short]{optidef}
\usepackage{xcolor}
\usepackage{authblk}

\newtheorem{theorem}{Theorem}
\newtheorem{lemma}[theorem]{Lemma}

\newtheorem{cor}[theorem]{Corollary}

\newtheorem{fact}[theorem]{Fact}

\theoremstyle{definition}
\newtheorem{definition}[theorem]{Definition}
\newtheorem{remark}[theorem]{Remark}

\newtheorem{notation}[theorem]{Notation}

\newcommand{\bbQ}{\mathbb{Q}}
\newcommand{\bbR}{\mathbb{R}}
\DeclareRobustCommand{\rchi}{{\mathpalette\irchi\relax}}
\newcommand{\irchi}[2]{\raisebox{\depth}{$#1\chi$}} 

\newcommand{\cC}{\mathcal{C}}

\newcommand{\cL}{\mathcal{L}}
\newcommand{\cP}{\mathcal{P}}
\newcommand{\cPnn}{\mathcal{P}_{n,n}}

\newcommand{\cF}{\mathcal{F}}

\title{A Real Polynomial for Bipartite Graph\\
Minimum Weight Perfect Matchings\footnote{Supported in part by NSF grant CCF-1815901.}}

\author[]{Thorben Tröbst}
\author[]{Vijay V. Vazirani}

\affil[]{\texttt{ttrbst@uci.edu, vazirani@ics.uci.edu}}
\affil[]{Department of Computer Science, University of California, Irvine}


\date{\today}

\begin{document}
    \maketitle
    
    	\begin{abstract}
In a recent paper, Beniamini and Nisan \cite{Beniamini2020bipartite} gave a closed-form formula for the unique multilinear polynomial for the Boolean function determining whether a given bipartite graph $G \subseteq K_{n,n}$ has a perfect matching, together with an efficient algorithm for computing the coefficients of the monomials of this polynomial. We give the following generalization: Given an arbitrary non-negative weight function $w$ on the edges of $K_{n,n}$, consider its set of minimum weight perfect matchings. We give the real multilinear polynomial for the Boolean function which determines if a graph $G \subseteq K_{n,n}$ contains one of these minimum weight perfect matchings.

	\end{abstract}

    \section{Introduction}
    \label{sec.intro}

    Every Boolean function $f\colon \{0, 1\}^n \rightarrow \{0, 1\}$ can be represented in a unique way as a real multilinear polynomial, and this and related representations have found numerous applications, e.g. see \cite{ODonnell}.
    In what appears to be a ground-breaking paper, Beniamini and Nisan \cite{Beniamini2020bipartite} gave this polynomial for the Boolean function determining whether a given bipartite graph $G \subseteq K_{n,n}$ has a perfect matching; importantly, they show how to efficiently compute the coefficient of any specified monomial of this polynomial. They also gave a number of applications of this fact, in particular, to communication complexity.

    Given an arbitrary non-negative weight function $w$ on the edges of $K_{n,n}$, consider its set of minimum weight perfect matchings.
    Building on the work of \cite{Beniamini2020bipartite}, we give the real multilinear  polynomial for the Boolean function which determines if a graph $G \subseteq K_{n,n}$ contains one of these minimum weight perfect matchings. As above, we can also efficiently compute the coefficients of the monomials of this polynomial.
    As mentioned in Remark \ref{rem.gen}, this is a generalization of the main theorem of \cite{Beniamini2020bipartite}. 
    

    \section{Key Definitions and Facts}
\label{sec.ingredients}
    
    In this section, we will give some key definitions and facts, culminating in the main theorem of \cite{Beniamini2020bipartite}. These notions pertain to bipartite graphs only.
    
    \begin{notation}
    	Given a graph $G$, $V(G)$ and $E(G)$ will denote its set of vertices and edges, respectively. If $G, H$ are two graphs, then $H$ is a {\em subgraph} of $G$, denoted $H \subseteq G$, if $V(H) = V(G)$ and $E(H) \subseteq E(G)$. $K_{n,n}$ denotes the complete balanced bipartite graph on $2n$ vertices. If $S$ is a set of subgraphs of $K_{n,n}$, then by $E(S)$ we mean the union of the sets of edges of the graphs in $S$, i.e., 
    	\[ E(S) = \bigcup_{G \in S} {E(G)} .\]
    \end{notation}
    
    \begin{definition}
    Let $\cF$ be a family of subgraphs of $K_{n,n}$ and $G$ be a subgraph of $K_{n,n}$. We will say that $G$ is {\em $\cF$-covered} if $\exists S \subseteq \cF \ s.t. \ E(G) = E(S)$. The {\em membership function for family $\cF$} is a 0/1 function $f_{\cF}$ on subgraphs of $K_{n,n}$ satisfying $f_{\cF}(G) = 1$ iff $\exists H \in \cF \ s.t. \ H \subseteq G$.
    \end{definition}
    
    Graph $G \subseteq K_{n,n}$ will be represented by the following setting of the $n^2$ variables $(x_{11}, \ldots , x_{n,n})$: $x_{i,j} = 1$ if $(i, j) \in E(G)$ and $x_{i,j} = 0$ if $(i, j) \notin E(G)$. Hence we will view the membership function for family $\cF$ as
    \[ f_{\cF}: \{0, 1 \}^{n^2} \rightarrow \{0, 1\} . \]
    
    \begin{definition}
    	Let graph $G \subseteq K_{n,n}$. By the {\em multilinear monomial corresponding to $G$}, denoted $m_G(x_{1,1}, \ldots , x_{n,n})$, we mean
    	\[ \prod_{(i, j) \in E(G)} {x_{i,j}} .\] 
    \end{definition}
    
    A following important fact was proven in \cite{Beniamini2020bipartite}:
    
    \begin{fact} (\cite{Beniamini2020bipartite})
    	\label{fact.membership}
    	  Let $\cF$ be a family of subgraphs of $K_{n,n}$.
    	The only monomials appearing in the multilinear real polynomial corresponding to the membership function for family $\cF$ are those corresponding to $\cF$-covered subgraphs of $K_{n,n}$.
    \end{fact}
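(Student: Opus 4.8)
The plan is to combine the standard Möbius/inclusion--exclusion formula for the coefficients of a multilinear polynomial with a one-variable cancellation argument. Let $P$ be the unique multilinear real polynomial agreeing with $f_{\cF}$ on $\{0,1\}^{n^2}$, where a subgraph is identified with its $0/1$ indicator on the $n^2$ coordinates indexed by $E(K_{n,n})$. First I would record the fact that the coefficient of the monomial $m_G$ in $P$ is
\[
  c_G \;=\; \sum_{H \subseteq G} (-1)^{|E(G)| - |E(H)|}\, f_{\cF}(H),
\]
the sum ranging over all subgraphs $H$ of $G$. This follows by expanding each $m_H = \prod_{e \in E(H)} x_e$ on the cube, writing $f_{\cF}(G) = \sum_{H \subseteq G} c_H$, and applying Möbius inversion on the Boolean lattice; the point to highlight is that $c_G$ depends only on the values of $f_{\cF}$ on subgraphs of $G$. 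The goal is then to show $c_G = 0$ whenever $G$ is not $\cF$-covered.

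Next I would exhibit a distinguished edge. Let $S = \{H \in \cF : H \subseteq G\}$ and let $H^\star \subseteq G$ be the subgraph with $E(H^\star) = E(S)$. If $G$ is not $\cF$-covered then $E(H^\star) \ne E(G)$, and since $E(H^\star) \subseteq E(G)$ trivially, there is an edge $e \in E(G) \setminus E(H^\star)$. By construction, no member of $\cF$ that is a subgraph of $G$ contains $e$.

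The key step is the claim that $f_{\cF}$ is independent of the coordinate $x_e$ on the subcube of subgraphs of $G$, i.e. $f_{\cF}(H) = f_{\cF}(H - e)$ for every $H \subseteq G$, where $H - e$ is $H$ with edge $e$ removed (and equals $H$ if $e \notin E(H)$). The inequality $f_{\cF}(H - e) \le f_{\cF}(H)$ is monotonicity, as $H - e \subseteq H$. Conversely, if $f_{\cF}(H) = 1$ then some $H' \in \cF$ satisfies $H' \subseteq H \subseteq G$; by the previous paragraph $e \notin E(H')$, so $H' \subseteq H - e$ and $f_{\cF}(H - e) = 1$.

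Finally I would substitute this into the formula for $c_G$ and pair each $H \subseteq G$ with the graph obtained by toggling the presence of $e$: the two graphs in a pair have equal $f_{\cF}$-value but $|E(\cdot)|$ differing by one, hence opposite signs, so every pair cancels and $c_G = 0$. (Equivalently: split the sum according to whether $e \in E(H)$ and, in the part with $e \in E(H)$, write $H = H' + e$ with $H' \subseteq G - e$ to see it is the negative of the other part.) Thus $m_G$ has coefficient $0$ in $P$ unless $G$ is $\cF$-covered, which is the statement. The only genuine obstacle is locating the right edge $e$ and verifying the independence claim $f_{\cF}(H) = f_{\cF}(H-e)$; after that the vanishing is purely formal bookkeeping.
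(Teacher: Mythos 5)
Your proof is correct. The paper itself states this as a Fact cited from Beniamini--Nisan and gives no proof, so there is nothing to compare against line by line; judged on its own, your argument is a complete and standard one: the Boolean-lattice M\"obius formula $c_G = \sum_{H \subseteq G} (-1)^{|E(G)|-|E(H)|} f_{\cF}(H)$ for the coefficients, the observation that any $S \subseteq \cF$ witnessing coverage of $G$ must consist of subgraphs of $G$ (so non-coverage yields an edge $e \in E(G)$ avoided by every member of $\cF$ contained in $G$), the resulting irrelevance of the coordinate $x_e$ on the subcube below $G$, and the sign-pairing cancellation. All steps check out, including the one slightly delicate point that monotonicity of $f_{\cF}$ plus the choice of $e$ gives $f_{\cF}(H) = f_{\cF}(H-e)$ for all $H \subseteq G$.
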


    	Let $\cF$ be a family of subgraphs of $K_{n,n}$. Let $\cC(\cF)$ denote the set of all $\cF$-covered subgraphs of $K_{n,n}$ and $\hat{0}$ denote the empty subgraph of $K_{n,n}$.  The set $\cC(\cF) \cup \{ \hat{0} \}$ under the relation $\subseteq$ defines a poset. 
    	
    	\begin{fact} \cite{Beniamini2020bipartite}
    	\label{fact.poset-lattice}
    		The poset $((\cC(\cF) \cup \{ \hat{0} \}), \subseteq)$ is a lattice.
    	\end{fact}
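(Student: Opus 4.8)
The plan is to show that $L := \cC(\cF) \cup \{\hat{0}\}$ is a finite join-semilattice possessing a least element, and then invoke the standard fact that any such poset is automatically a lattice. So the bulk of the work is to verify closure under joins and the existence of a bottom element; the meet operation is then obtained formally.

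First I would check closure under joins. Given $G_1, G_2 \in L$ with witnessing sets $S_1, S_2 \subseteq \cF$, i.e.\ $E(G_i) = E(S_i)$ (taking $S_i = \emptyset$ if $G_i = \hat{0}$), let $G_1 \vee G_2$ be the subgraph of $K_{n,n}$ with edge set $E(G_1) \cup E(G_2)$. Since $E(S)$ is defined as $\bigcup_{H \in S} E(H)$, we get $E(G_1 \vee G_2) = E(S_1) \cup E(S_2) = E(S_1 \cup S_2)$, so $G_1 \vee G_2$ is $\cF$-covered and hence lies in $L$. Moreover it is by construction the least upper bound of $G_1, G_2$ under $\subseteq$: it contains both, and any $H \in L$ with $G_1 \subseteq H$ and $G_2 \subseteq H$ satisfies $E(G_1) \cup E(G_2) \subseteq E(H)$, i.e.\ $G_1 \vee G_2 \subseteq H$. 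Also $\hat{0}$ is clearly the least element of $L$, and $L$ is finite because $K_{n,n}$ has only finitely many subgraphs.

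It remains to produce meets. Given $G_1, G_2 \in L$, consider the set $D = \{ H \in L : H \subseteq G_1 \text{ and } H \subseteq G_2 \}$ of common lower bounds. Then $D$ is finite and nonempty (it contains $\hat{0}$), and it is closed under $\vee$: if $H, H' \in D$ then $E(H) \cup E(H') \subseteq E(G_1) \cap E(G_2)$, so $H \vee H' \subseteq G_1$ and $H \vee H' \subseteq G_2$, whence $H \vee H' \in D$. Iterating the binary join over the finitely many elements of $D$ produces an element $\bigvee_{H \in D} H \in D$ that dominates every member of $D$; by definition this is the greatest lower bound $G_1 \wedge G_2$. Thus every pair in $L$ has a meet and a join, so $L$ is a lattice.

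The one point requiring care — and the reason the argument must go through the common lower bounds rather than directly — is that the naive candidate for the meet, namely the graph with edge set $E(G_1) \cap E(G_2)$, need \emph{not} be $\cF$-covered and so need not belong to $L$; there is no "closure under intersection" available. Since joins \emph{are} well behaved (intersection on the semilattice side is replaced by "join of everything below"), the combination of join-closure with a bottom element is exactly what makes the meet exist, and this is the only subtlety in the proof.
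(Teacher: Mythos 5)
The paper states this fact without proof, citing Beniamini--Nisan, so there is no in-paper argument to compare against; your proposal is correct and is the standard one. You rightly identify the only subtlety (edge-intersections of $\cF$-covered graphs need not be $\cF$-covered, so meets must be obtained as the join of all common lower bounds rather than directly), and the join-closure step $E(G_1)\cup E(G_2)=E(S_1\cup S_2)$ together with the bottom element $\hat{0}$ and finiteness is exactly what is needed.
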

    	
     We will denote the lattice corresponding to family $\cF$ by $\cL(\cF)$. 
    	Next we define a very special lattice. Let $\cPnn$ denote the family of perfect matchings of $K_{n,n}$. The $\cPnn$-covered graphs are called {\em matching covered graphs}, see \cite{LP.book}. The lattice $\cL(\cPnn)$ has special properties, as shown by Billera and Sarangarajan \cite{Billera1994combinatorics}:
    	
    	\begin{fact} \cite{Billera1994combinatorics}
    		\label{fact.Billera}
    		The lattice $\cL(\cPnn)$ is isomorphic to the face lattice of the Birkhoff polytope for $K_{n,n}$.
    	\end{fact}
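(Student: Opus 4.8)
The plan is to exhibit an explicit order isomorphism $\Phi$ between $\cL(\cPnn) = (\cC(\cPnn) \cup \{\hat 0\},\subseteq)$ and the face lattice of the Birkhoff polytope $B_n \subseteq \bbR^{n\times n}$ of $n\times n$ doubly stochastic matrices, and then invoke the elementary fact that an order isomorphism between two lattices is automatically a lattice isomorphism. Identifying a subgraph $G\subseteq K_{n,n}$ with the coordinate subspace it picks out, define
\[ \Phi(G) \;=\; \{\, M\in B_n \;:\; M_{i,j}=0 \text{ for all } (i,j)\notin E(G)\,\}, \qquad \Phi(\hat 0)=\emptyset. \]
Since the only facet-defining inequalities of $B_n$, given its equality constraints $\sum_j M_{i,j}=\sum_i M_{i,j}=1$, are the nonnegativity constraints $M_{i,j}\ge 0$, every set of the form $\Phi(G)$ is a face of $B_n$, and (as I discuss below) every nonempty face of $B_n$ is $\Phi(G)$ for a suitable $G$.

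The key step is to show that $\Phi$ restricts to a bijection from $\cC(\cPnn)\cup\{\hat 0\}$ onto the set of faces of $B_n$. First I would observe, using the Birkhoff--von Neumann theorem, that $\Phi(G)$ is the convex hull of the perfect matchings contained in $G$: the inclusion $\supseteq$ is clear, and conversely any doubly stochastic $M$ supported inside $E(G)$ is a convex combination of permutation matrices, each of which (appearing with positive weight) has support inside $\operatorname{supp}(M)\subseteq E(G)$ and hence is a perfect matching of $G$. In particular $\Phi(G)\ne\emptyset$ exactly when $G$ contains a perfect matching, and the vertices of the face $\Phi(G)$ are precisely the perfect matchings $M\subseteq G$. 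Next, for a matching covered $G$, i.e. $E(G)=\bigcup\{E(M): M\subseteq G \text{ a perfect matching}\}$, the graph recovered from the face $\Phi(G)$ as the union of the supports of all its points is again $\bigcup\{E(M):M\subseteq G\}=E(G)$, so $\Phi$ is injective on matching covered graphs. Finally, any nonempty face $F$ of $B_n$ equals $\Phi(G_F)$, where $G_F$ is the union of the supports of the points of $F$, and $G_F$ is matching covered because each of its edges lies in the support of some point of $F$, which is a convex combination of perfect matchings contained in $G_F$.

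It then remains to check that $\Phi$ and $\Phi^{-1}$ are order preserving. That $G\subseteq H$ implies $\Phi(G)\subseteq\Phi(H)$ is immediate from the definition. Conversely, if $\Phi(G)\subseteq\Phi(H)$ then every perfect matching contained in $G$ is contained in $H$; since $G$ is matching covered, $E(G)$ is the union of the edge sets of these matchings and therefore $E(G)\subseteq E(H)$. The empty graph $\hat 0$ maps to the empty face, the bottom element on both sides (and $K_{n,n}\mapsto B_n$, the top element), so $\Phi$ is an isomorphism of posets, hence of lattices.

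The part I expect to require the most care is the identification of the nonempty faces of $B_n$ with their ``maximal-support'' graphs --- that a face is exactly cut out by forcing to zero those coordinates that vanish on its relative interior --- together with the verification that this maximal-support graph is matching covered. This is precisely where the facet structure of $B_n$ (only nonnegativity constraints, thanks to total unimodularity of the bipartite incidence matrix) and the Birkhoff--von Neumann theorem are both used, and it is the heart of the Billera--Sarangarajan argument.
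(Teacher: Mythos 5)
The paper states this as Fact~\ref{fact.Billera} and cites Billera--Sarangarajan for it without giving any proof, so there is no in-paper argument to compare against; your proposal supplies the standard proof, and it is correct. The map $\Phi(G)=\{M\in B_n : M_{i,j}=0 \ \forall (i,j)\notin E(G)\}$, the identification of $\Phi(G)$ with the convex hull of the perfect matchings contained in $G$ via Birkhoff--von Neumann, the recovery of a matching-covered graph from a face as the union of the supports of its points, and the two-sided order-preservation check are exactly the right ingredients, and you correctly isolate the one step needing care: that a nonempty face $F$ equals $\Phi(G_F)$, i.e.\ that \emph{every} perfect matching supported inside $G_F$ is a vertex of $F$. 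That step is clean to finish: take $M^*$ in the relative interior of $F$ (so $\operatorname{supp}(M^*)=E(G_F)$); for a perfect matching $N\subseteq G_F$ and small $\epsilon>0$ the matrix $(M^*-\epsilon N)/(1-\epsilon)$ is doubly stochastic, so $M^*$ is a proper convex combination of $N$ and another point of $B_n$, and since $F$ is a face this forces $N\in F$. Two small quibbles: total unimodularity is not what makes the nonnegativity constraints the only inequalities in the standard description of $B_n$ --- that is just the chosen presentation, and the general fact that every face of $\{x : Ax=b,\ x\ge 0\}$ arises by setting some coordinates to zero needs no unimodularity; TU (or a direct augmenting-cycle argument) is instead what underlies Birkhoff--von Neumann, which you invoke separately. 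Also note that with the paper's definition of $\cF$-covered, the empty graph is already covered via $S=\emptyset$, so the bottom element $\hat 0\mapsto\emptyset$ matches up with the empty face exactly as you say.
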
 
    	
    	It is well known that the face lattice of the Birkhoff polytope is not only graded but also Eulerian, see \cite{Stanley}. A lattice is {\em graded} if it admits a rank function $\rho$ satisfying:
    	\begin{enumerate}
    		\item If $x < y$ then $\rho(x) < \rho(y)$.
    		\item If $y$ covers $x$ (i.e., $x < y$ and there is no element $z$ in the poset such that $x < z< y$), then $\rho(y) = \rho(x) + 1$.
    	\end{enumerate}
    	Given any two elements $x < y$ in the lattice, by the {\em interval of $x$ and $y$} we mean the set $\{ z \ |\ x \leq z \leq y \}$. 
    	A graded lattice is {\em Eulerian} if in every interval, the number of even-ranked and odd-ranked elements must be equal. Since $\cL(\cPnn)$ is isomorphic to the face lattice of the Birkhoff polytope, it is also Eulerian.
    	
        \begin{definition}[see \cite{Stanley}]
    	\label{def.mobius}
    	Let $\cP = (P, <)$ be a finite partial order.
    	Then the Möbius function on $\cP$, $\mu: \cP \times \cP \rightarrow \bbR$, is defined as follows: \\
    	\[ \forall x \in P: \ \ \mu(x, x) = 1 \]
    	\[ \forall x, y \in P, \ \mbox{with} \ y < x, \ \ \mu(y, x) = - \sum_{y \leq z < x} {\mu(y, z)} \]
    	For $x \in P, \ \mu(\hat{0}, x)$ is called the {\em Möbius number} of $x$.
    \end{definition}
    
    Since $\cL(\cPnn)$ is an Eulerian lattice, one can show that the Möbius function on this lattice can be expressed in terms of the rank function as follows:
    
    \begin{fact} (see \cite{Stanley})
    	\label{fact.easy}
    	 For any graph $G \in \cL(\cPnn)$, $\mu(\hat{0}, G) = (-1)^{\rho(G)}$.
    \end{fact}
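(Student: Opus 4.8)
The plan is to prove the identity $\mu(\hat{0}, G) = (-1)^{\rho(G)}$ by induction on the rank $\rho(G)$, using only the recursive definition of the Möbius function (Definition \ref{def.mobius}) together with the fact that $\cL(\cPnn)$ is an Eulerian lattice. First I would normalize the rank function so that $\rho(\hat{0}) = 0$; this is forced, since $\hat{0}$ is the bottom element of the graded lattice. The base case is immediate: if $\rho(G) = 0$ then $G = \hat{0}$, and $\mu(\hat{0}, \hat{0}) = 1 = (-1)^0$ straight from the definition.

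For the inductive step, assume $\rho(G) \geq 1$ and that $\mu(\hat{0}, H) = (-1)^{\rho(H)}$ for every $H \in \cL(\cPnn)$ with $\rho(H) < \rho(G)$. By the defining recursion,
\[ \mu(\hat{0}, G) \;=\; -\sum_{\hat{0} \leq H < G} \mu(\hat{0}, H) \;=\; -\sum_{\hat{0} \leq H < G} (-1)^{\rho(H)}, \]
where the second equality uses the inductive hypothesis, valid because every $H$ occurring in the sum satisfies $\rho(H) < \rho(G)$. Now I would invoke the Eulerian property on the interval $[\hat{0}, G]$: since $G \neq \hat{0}$, this interval contains equally many even-ranked and odd-ranked elements, so $\sum_{\hat{0} \leq H \leq G} (-1)^{\rho(H)} = 0$. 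Splitting off the top term $H = G$ gives $\sum_{\hat{0} \leq H < G} (-1)^{\rho(H)} = -(-1)^{\rho(G)}$, and substituting this back yields $\mu(\hat{0}, G) = (-1)^{\rho(G)}$, which closes the induction.

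I do not expect a genuine obstacle here, as this is a classical identity for Eulerian posets (cf.\ \cite{Stanley}); the only point requiring a little care is that the parity appearing in the Eulerian condition is measured by the global rank function $\rho$, which is consistent with the interval $[\hat{0}, G]$ precisely because we normalized $\rho(\hat{0}) = 0$, and that the base case must be handled separately since the one-element interval $\{\hat{0}\}$ does not satisfy the even/odd balance. Alternatively one could quote the stronger fact that in any Eulerian poset $\mu(x,y) = (-1)^{\rho(y) - \rho(x)}$ for all $x \leq y$ and specialize to $x = \hat{0}$, but the direct induction above is self-contained and uses nothing beyond the facts already recorded in this section.
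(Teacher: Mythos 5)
Your proof is correct and is exactly the standard induction the paper implicitly relies on: the paper states this as a Fact with a citation to \cite{Stanley} and gives no proof, and your argument (induct on rank, apply the even/odd balance of the Eulerian interval $[\hat{0},G]$, peel off the top element) is the classical derivation. Your side remarks are also well taken — the singleton interval must be excluded from the balance condition, and the normalization $\rho(\hat{0})=0$ is consistent with Fact \ref{fact.rank}, since a perfect matching $M$ has $\chi(M)=0$ and hence $\rho(M)=1$.
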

    
     Furthermore, \cite{Beniamini2020bipartite} show that $\rho(G)$ is easy to compute using the notion of cyclomatic number of a graph.

    \begin{definition}
    	\label{def.cyclomatic}
    	For any subgraph $G$ of $K_{n,n}$, define its {\em cyclomatic number} to be 
    	\[ \chi(G) = E(G) - V(G) + C(G), \]
    	where $C(G)$ is the number of connected components of $G$.
    \end{definition}
    
    The proof of Fact \ref{fact.rank} uses the notion of ear-decomposition of a connected, matching-covered bipartite graph; see \cite{LP.book} for a detailed discussion of this notion.  
    
    \begin{fact} \cite{Beniamini2020bipartite}
    	\label{fact.rank}
    	For any matching covered subgraph $G$ of $K_{n,n}$, $\rho(G) = \chi(G) + 1$.
    \end{fact}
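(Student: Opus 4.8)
\emph{Proposed approach.} The plan is to route everything through the polytope picture supplied by Fact~\ref{fact.Billera}. Under that isomorphism a matching covered graph $G\subseteq K_{n,n}$ corresponds to the face
\[
F_G \;=\; \{\, X\in B_n \;:\; X_{i,j}=0 \text{ for all } (i,j)\notin E(G) \,\}
\]
of the Birkhoff polytope $B_n$ of $n\times n$ doubly stochastic matrices, and $\hat 0$ corresponds to the empty face. The face lattice of any polytope is graded, and the rank of a face in it is its dimension plus one (the empty face having dimension $-1$); since $\cL(\cPnn)$ is this face lattice, $\rho(G)=\dim F_G+1$. So it suffices to prove $\dim F_G=\chi(G)$ for every matching covered $G$.

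\emph{Computing $\dim F_G$.} I would compute the affine hull of $F_G$ directly. It lies inside the affine subspace $\mathcal A_G$ of matrices supported on $E(G)$ whose row sums and column sums all equal $1$; the coefficient matrix of those $2n$ equations, read inside $\bbR^{E(G)}$, is exactly the vertex--edge incidence matrix $M_G$ of the bipartite graph $G$. The left kernel of the incidence matrix of a bipartite graph is spanned, componentwise, by the $\pm1$ indicator of the two colour classes, so $\rk M_G = |V(G)| - C(G)$, whence $\dim \mathcal A_G = |E(G)| - |V(G)| + C(G)$. Now $F_G$ being matching covered it has a perfect matching, so $|V(G)|=2n$ (no isolated vertices), and, crucially, every edge of $G$ lies in some perfect matching of $G$; averaging the corresponding permutation matrices produces a point of $F_G$ whose support is all of $E(G)$, i.e.\ a point in the relative interior of $\mathcal A_G\cap\{X\ge 0\}=F_G$. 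Hence $\operatorname{aff}(F_G)=\mathcal A_G$ and $\dim F_G = |E(G)| - |V(G)| + C(G) = \chi(G)$, giving $\rho(G)=\chi(G)+1$.

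\emph{The ear-decomposition route.} The same conclusion can be reached along the lines indicated before the statement. For connected matching covered $G$ one takes a bipartite ear decomposition $G=P_0\cup P_1\cup\dots\cup P_r$ (Lov\'asz--Plummer, \cite{LP.book}); a one-line count — the single edge $P_0$ contributes $1$ edge and $2$ vertices, and every odd ear $P_i$ contributes exactly one more edge than internal vertex — gives $\chi(G)=r$, and the classical fact that adjoining an ear raises the dimension of the perfect matching polytope by exactly one gives $\dim F_G=r=\chi(G)$. The disconnected case then follows because $F_G$ factors as a direct product over the connected components of $G$, and both $\chi$ and dimension are additive over that product. This is the step where the ear machinery of \cite{LP.book} does the real work, in place of the incidence-matrix argument above.

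\emph{Main obstacle.} The genuine content is the tightness of the dimension formula. The incidence-matrix computation only bounds $\dim F_G$ by $\chi(G)$ once one knows that $\operatorname{aff}(F_G)$ is as large as $\mathcal A_G$, and this is precisely where matching-coveredness of $G$ enters — it is equivalent to $F_G$ containing a matrix of full support on $E(G)$; in the ear-decomposition proof the same difficulty reappears as the claim that each successive ear strictly increases the polytope dimension rather than leaving it unchanged. A minor point to record cleanly for the first reduction is that a polytope's face lattice is graded with rank equal to dimension plus one, so that the rank function referred to in Fact~\ref{fact.rank} is indeed this dimension shift.
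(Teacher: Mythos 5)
Your proposal is correct, and in fact it over-delivers: the paper does not prove Fact~\ref{fact.rank} at all --- it is quoted from \cite{Beniamini2020bipartite}, with only the remark that the proof goes through ear-decompositions of connected matching covered bipartite graphs. Your second route is exactly that indicated argument (the count $\chi(G)=r$ for an $r$-ear decomposition is right, as is the additivity over components), while your first route is a genuinely different and self-contained alternative: reduce via Fact~\ref{fact.Billera} to computing $\dim F_G$, identify $\operatorname{aff}(F_G)$ with the affine space cut out by the degree constraints on $\bbR^{E(G)}$, and use the standard fact that the unsigned incidence matrix of a bipartite graph has rank $|V(G)|-C(G)$. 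The one place that needs care --- showing $\operatorname{aff}(F_G)$ is all of $\mathcal A_G$ rather than a proper subspace --- you handle correctly by averaging permutation matrices to get a fully supported, hence relatively interior, point of $F_G$; this is precisely where matching-coveredness enters, and you flag it as such. What the linear-algebra route buys is independence from the ear-decomposition machinery of \cite{LP.book} (you only need the elementary rank computation and the definition of a face); what the ear route buys is that it works intrinsically in the lattice without invoking polytope dimensions, which is closer in spirit to how \cite{Beniamini2020bipartite} present it. Only cosmetic nit: ``$F_G$ being matching covered'' should read ``$G$ being matching covered.''
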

    
    The last ingredient needed is:
    
    \begin{fact} \cite{Beniamini2020bipartite}
    	\label{fact.last}
    	    Let $\cF$ be a family of subgraphs of $K_{n,n}$. Then the membership function for this family is given by:
    	    \[  f_{\cF} (x_{1,1}, \ldots , x_{n,n}) = \sum_{G \in \cC(\cF)} { - \mu(\hat{0}, G) \cdot m_G(x_{1,1}, \ldots , x_{n,n}) } \]
    	        \end{fact}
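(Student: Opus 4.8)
The plan is to verify the claimed formula as an identity of functions on the Boolean cube $\{0,1\}^{n^2}$, using Möbius inversion over the poset $\cL(\cF)$. First I would fix an arbitrary subgraph $G^* \subseteq K_{n,n}$ with indicator vector $x^{G^*}$ (so $x^{G^*}_{i,j}=1$ iff $(i,j)\in E(G^*)$). Since all subgraphs of $K_{n,n}$ carry the same vertex set, $m_G(x^{G^*}) = \prod_{(i,j)\in E(G)} x^{G^*}_{i,j}$ equals $1$ precisely when $E(G)\subseteq E(G^*)$, i.e.\ $G\subseteq G^*$, and $0$ otherwise. Hence the right-hand side evaluated at $x^{G^*}$ collapses to
\[ \sum_{\substack{G \in \cC(\cF)\\ G \subseteq G^*}} -\,\mu(\hat 0, G), \]
and everything reduces to showing this equals $f_{\cF}(G^*)$. (Because the right-hand side is visibly multilinear, the resulting pointwise agreement also certifies, by uniqueness, that it is \emph{the} multilinear representation of $f_{\cF}$.)

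The heart of the argument is to identify the index set $\{G\in\cC(\cF):G\subseteq G^*\}$. If $f_{\cF}(G^*)=0$, I claim this set is empty: any $\cF$-covered $G\subseteq G^*$ is, by definition, the edge-union of a non-empty subfamily $S\subseteq\cF$, and then every $H\in S$ has $E(H)\subseteq E(G)\subseteq E(G^*)$, so $H\subseteq G^*$, contradicting $f_{\cF}(G^*)=0$. (Here I use, consistently with the paper's notation, that $\hat 0\notin\cC(\cF)$ --- equivalently, that the empty graph is not $\cF$-covered, which holds as soon as every member of $\cF$ has an edge, in particular for $\cF=\cPnn$ and for the minimum-weight matching families of interest.) So the sum is $0=f_{\cF}(G^*)$ in this case. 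If $f_{\cF}(G^*)=1$, I would let $\hat G$ be the subgraph with $E(\hat G)=\bigcup\{E(H):H\in\cF,\ H\subseteq G^*\}$, the union of all members of $\cF$ fitting inside $G^*$; this union is non-empty, so $\hat G\in\cC(\cF)$ and $\hat G\subseteq G^*$. Moreover every $G\in\cC(\cF)$ with $G\subseteq G^*$ satisfies $G\subseteq\hat G$, since each graph in a covering subfamily of $G$ belongs to $\{H\in\cF:H\subseteq G^*\}$. Thus $\hat G$ is the maximum of $\{G\in\cC(\cF):G\subseteq G^*\}$, and inside $\cL(\cF)=(\cC(\cF)\cup\{\hat 0\},\subseteq)$ this index set is exactly the interval $\{z:\hat 0<z\leq\hat G\}$.

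It then remains to invoke the elementary consequence of Definition \ref{def.mobius} --- obtained by rearranging the defining recursion --- that $\sum_{\hat 0\leq z\leq y}\mu(\hat 0,z)=0$ whenever $y>\hat 0$ in a finite poset. Applying this with $y=\hat G\neq\hat 0$ and splitting off the term $\mu(\hat 0,\hat 0)=1$ yields $\sum_{\hat 0<G\leq\hat G}\mu(\hat 0,G)=-1$, so $\sum_{G\in\cC(\cF),\,G\subseteq G^*}-\mu(\hat 0,G)=1=f_{\cF}(G^*)$, finishing the verification since $G^*$ was arbitrary.

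I do not expect a genuine obstacle here: the content is standard Möbius inversion (equivalently, inclusion--exclusion), and the only point demanding care is the structural claim that the $\cF$-covered subgraphs of a fixed $G^*$ form an interval of $\cL(\cF)$ --- i.e.\ possess a maximum element --- together with the minor bookkeeping around $\hat 0$. One could instead start from the identity $f_{\cF}=1-\prod_{H\in\cF}(1-m_H)$ and regroup the expansion by the graph $\bigcup_{H\in S}H$, but then one would have to invoke Rota's crosscut theorem to recognize the coefficients as $-\mu(\hat 0,G)$; the direct route above avoids that. Note finally that neither the lattice property (Fact \ref{fact.poset-lattice}) nor the Eulerian structure is used for this identity; those enter only afterwards, via Facts \ref{fact.easy} and \ref{fact.rank}, to turn the Möbius numbers into an explicit, efficiently computable form.
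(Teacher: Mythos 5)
Your argument is correct: the paper itself states this fact without proof (importing it from \cite{Beniamini2020bipartite}), and your pointwise verification --- identifying the maximum $\cF$-covered subgraph $\hat G$ of $G^*$ and then invoking the identity $\sum_{\hat 0 \le z \le \hat G} \mu(\hat 0, z) = 0$ from the defining recursion --- is exactly the standard M\"obius-inversion proof given in that reference, and it correctly avoids any reliance on the lattice or Eulerian structure. Your explicit handling of the convention that $\hat 0 \notin \cC(\cF)$ (equivalently, that covering subfamilies $S \subseteq \cF$ are taken non-empty) is the right reading of the paper's notation $\cC(\cF) \cup \{\hat 0\}$, and is indeed needed, since otherwise the formula would acquire a spurious constant term $-1$.
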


    By Fact \ref{fact.easy}, we get the Möbius function in terms of the rank function and by Fact \ref{fact.rank} we get the rank function in terms of $\chi(G)$. Substituting this in Fact \ref{fact.last}, we get:
    
    \begin{theorem} \cite{Beniamini2020bipartite}
    	\label{thm.main-BN}
    	\[ f_{\cPnn} (x_{1,1}, \ldots , x_{n,n}) = \sum_{G \in \cC(\cPnn)} {  (-1)^{\chi(G)} \cdot m_G (x_{1,1}, \ldots , x_{n,n}) } \]
    \end{theorem}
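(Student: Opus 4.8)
The plan is to obtain Theorem~\ref{thm.main-BN} as a direct specialization of Fact~\ref{fact.last} to the family $\cPnn$ of perfect matchings, after rewriting the Möbius numbers $\mu(\hat 0, G)$ appearing there in purely combinatorial terms. The only work is to chain together three facts already in hand, so I expect no serious obstacle; the argument is essentially sign-tracking.

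First I would instantiate Fact~\ref{fact.last} with $\cF = \cPnn$, which gives
\[ f_{\cPnn}(x_{1,1}, \ldots, x_{n,n}) = \sum_{G \in \cC(\cPnn)} -\mu(\hat 0, G)\, m_G(x_{1,1}, \ldots, x_{n,n}), \]
so it suffices to show $-\mu(\hat 0, G) = (-1)^{\chi(G)}$ for every $G \in \cC(\cPnn)$. Next I would invoke the structural facts about $\cL(\cPnn)$: by Fact~\ref{fact.Billera} this lattice is (isomorphic to) the face lattice of the Birkhoff polytope, hence graded and Eulerian, so Fact~\ref{fact.easy} applies and yields $\mu(\hat 0, G) = (-1)^{\rho(G)}$ for the lattice rank function $\rho$. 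Fact~\ref{fact.rank} then identifies $\rho(G) = \chi(G) + 1$, with $\chi(G)$ the cyclomatic number of Definition~\ref{def.cyclomatic}. Combining these, $-\mu(\hat 0, G) = -(-1)^{\rho(G)} = -(-1)^{\chi(G)+1} = (-1)^{\chi(G)}$, and substituting back into the displayed identity gives exactly the claimed polynomial.

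Since all the substance — the Möbius-inversion formula of Fact~\ref{fact.last}, the Eulerian structure behind Fact~\ref{fact.easy}, and the cyclomatic computation of rank in Fact~\ref{fact.rank} — is already available, the remaining argument is just the elementary sign bookkeeping above. The one point I would double-check is the role of the bottom element $\hat 0$: one should confirm that the index set $\cC(\cPnn)$ is interpreted identically in Fact~\ref{fact.last} and in the theorem statement, so that the constant monomial $m_{\hat 0} \equiv 1$ does not acquire a spurious nonzero coefficient (it cannot, since $f_{\cPnn}$ vanishes on the empty graph). With that convention fixed, the substitution is immediate and the theorem follows.
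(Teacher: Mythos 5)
Your proposal is correct and follows exactly the paper's own derivation: instantiate Fact~\ref{fact.last} for $\cF = \cPnn$, then substitute $\mu(\hat 0, G) = (-1)^{\rho(G)}$ from Fact~\ref{fact.easy} and $\rho(G) = \chi(G)+1$ from Fact~\ref{fact.rank} to turn $-\mu(\hat 0, G)$ into $(-1)^{\chi(G)}$. The sign bookkeeping and the remark about the bottom element match the intended argument.
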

    
        \begin{cor} \cite{Beniamini2020bipartite}
        \label{cor.odd}
        The number of matching-covered subgraphs of $K_{n,n}$ is odd.
    \end{cor}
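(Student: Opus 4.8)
The plan is to read off the corollary by evaluating the polynomial identity of Theorem~\ref{thm.main-BN} at a single well-chosen point, namely the all-ones assignment $x_{i,j} = 1$ for all $i,j$. This assignment encodes the graph $K_{n,n}$ itself, which (for $n \geq 1$) certainly contains a perfect matching, so the left-hand side $f_{\cPnn}(1, \ldots, 1)$ equals $1$. On the right-hand side, every monomial $m_G$ is a product of variables, so $m_G(1, \ldots, 1) = 1$ regardless of $G$. Hence the identity collapses to
\[ 1 = \sum_{G \in \cC(\cPnn)} (-1)^{\chi(G)}. \]

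Now I would finish with an elementary parity argument. Let $a$ be the number of matching-covered subgraphs $G \subseteq K_{n,n}$ with $\chi(G)$ even, and $b$ the number with $\chi(G)$ odd, so that the displayed sum is $a - b$; thus $a - b = 1$. The total count of matching-covered subgraphs is $|\cC(\cPnn)| = a + b = (a - b) + 2b = 1 + 2b$, which is odd. There is no real obstacle here: the only things to check are that the all-ones input is a legitimate Boolean input to $f_{\cPnn}$ and that it corresponds to a graph with a perfect matching — both immediate — after which the statement is a one-line consequence of Theorem~\ref{thm.main-BN}.
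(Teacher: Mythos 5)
Your proof is correct, but it takes a different route from the paper. The paper does not touch the polynomial at all: it invokes the Eulerian property of $\cL(\cPnn)$ directly, noting that every closed interval in an Eulerian lattice has even cardinality (equal numbers of even- and odd-ranked elements), and that the interval $[\hat{0}, K_{n,n}]$ consists of $\hat{0}$ together with exactly the matching-covered subgraphs, so their number is even minus one. You instead substitute the all-ones point into the identity of Theorem~\ref{thm.main-BN} to get $\sum_{G \in \cC(\cPnn)} (-1)^{\chi(G)} = 1$ and finish by reducing mod $2$; the two checks you flag (that the all-ones input is Boolean and that $K_{n,n}$ has a perfect matching) are indeed the only ones needed, since the multilinear representation agrees with $f_{\cPnn}$ on every point of $\{0,1\}^{n^2}$. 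Your argument actually yields the slightly sharper signed statement that the matching-covered subgraphs with even cyclomatic number outnumber those with odd cyclomatic number by exactly one, and it transfers verbatim to the weighted setting of Theorem~\ref{thm:min_weight_bpm} (as asserted in Remark~\ref{rem.gen}). The trade-off is that it rests on the full strength of Theorem~\ref{thm.main-BN} --- which itself is derived from the Eulerian property via Facts~\ref{fact.easy} and \ref{fact.rank} --- whereas the paper's one-line lattice argument needs only Fact~\ref{fact.Billera}.
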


    \begin{proof}
        Since $\cL(\cP_{n, n})$ is Eulerian, the cardinality of every closed interval is even. Furthermore, since the closed interval $[\hat{0}, K_{n,n}]$ contains all matching-covered subgraphs of $K_{n,n}$ and the empty graph, the lemma follows.
    \end{proof}
    
   \begin{remark}
   Fact \ref{fact.last} gives the multilinear real polynomial in closed form for an arbitrary  membership function; however, it is in terms of the Möbius function. It turns out that the latter is not easy to compute in general, as we will see in Section \ref{sec.other}. The key step taken by \cite{Beniamini2020bipartite} was to show how to efficiently compute the Möbius number corresponding to any specified monomial of formula for the membership function of perfect matchings. This critically depends on the fact that the lattice $\cPnn$ is Eulerian, which follows from Fact \ref{fact.Billera}, and Fact \ref{fact.easy}, which relates the rank of a matching covered graph to its cyclomatic number. Of course the entire formula will in general have exponentially many monomials.
      \end{remark}

    \section{Extension to Minimum Weight Perfect Matchings}

   We next turn to minimum weight perfect matchings.
Let $w : E(K_{n, n}) \rightarrow \bbQ_+$ be a fixed but arbitrary weight function.

    \begin{definition}
        Let $\cP^w_{n, n}$ denote the family of minimum weight perfect matchings of $K_{n,n}$ with respect to weight function $w$, and $f_{n,w}: \{0, 1\}^{n^2} \rightarrow \{0, 1\}$ denote the membership function for this family. Let $\cC(\cPnn^w)$ denote the set of all $\cPnn^w$-covered subgraphs of $K_{n,n}$.
           \end{definition}

By Fact \ref{fact.poset-lattice}, the poset $((\cC(\cPnn^w) \cup \{ \hat{0} \}), \subseteq)$ is a lattice, which we will denote by $\cL(\cPnn^w)$.

    \begin{lemma}\label{lem:downward_hull}
        Let $G_w$ be the graph whose edge set is the union of all matchings in $\cP^w_{n, n}$.
        Then
        \[
            \cL(\cP^w_{n, n}) \subseteq \{G \in \cL(\cP_{n, n}) \mid G \leq G_w\},
        \]
        i.e.\ $\cL(\cP^w_{n, n})$ is the closed interval $[0, G_w]$ in $\cL(\cP_{n, n})$.
    \end{lemma}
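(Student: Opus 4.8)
The plan is to establish the set equality $\cL(\cPnn^w) = [\hat{0}, G_w]$, with the interval taken inside $\cL(\cPnn)$; since both sides carry the partial order induced by $\subseteq$, this simultaneously identifies them as posets. I would prove the two inclusions separately. The inclusion $\cL(\cPnn^w) \subseteq [\hat{0}, G_w]$ is essentially immediate: first, $G_w$ is a union of perfect matchings of $K_{n,n}$, hence matching covered, so $G_w \in \cL(\cPnn)$ and the interval is well defined; second, if $G \in \cL(\cPnn^w)$ and $G \neq \hat{0}$, then $E(G) = E(S)$ for some $S \subseteq \cPnn^w$, and every $M \in S$ is a perfect matching of $K_{n,n}$, so $G$ is $\cPnn$-covered, while every edge of $G$ lies in some $M \in S$ and therefore in $G_w$, giving $G \leq G_w$.

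The content of the lemma is the reverse inclusion, and it rests on the following claim: \emph{every perfect matching of $K_{n,n}$ whose edge set is contained in $E(G_w)$ is a minimum weight perfect matching.} Granting the claim, let $G \in \cL(\cPnn)$ with $G \leq G_w$. If $G = \hat{0}$ we are done; otherwise, since $G$ is matching covered, $E(G) = \bigcup_k E(M_k)$ for perfect matchings $M_k$ of $K_{n,n}$, and $E(M_k) \subseteq E(G) \subseteq E(G_w)$, so the claim yields $M_k \in \cPnn^w$ for each $k$. Hence $G$ is $\cPnn^w$-covered, i.e.\ $G \in \cL(\cPnn^w)$. Note that this is exactly where the claim does its work: the hypothesis $G \leq G_w$ only says each edge of $G$ individually appears in some minimum weight perfect matching, and the claim promotes this to the statement that $E(G)$ decomposes into minimum weight perfect matchings.

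To prove the claim I would invoke linear programming duality for the assignment problem. Since $w$ is nonnegative and $K_{n,n}$ has a perfect matching, the minimum weight $W$ of a perfect matching is attained, and by duality (equivalently, by total unimodularity of the bipartite incidence matrix) there are potentials $u_i$ on the left vertices and $v_j$ on the right vertices with $u_i + v_j \leq w(i,j)$ for every edge $(i,j)$ and $\sum_i u_i + \sum_j v_j = W$. Call $(i,j)$ \emph{tight} if $u_i + v_j = w(i,j)$ and let $H \subseteq K_{n,n}$ be the spanning subgraph of tight edges. For any minimum weight perfect matching $M$, complementary slackness gives $\sum_{(i,j) \in M}\bigl(w(i,j) - u_i - v_j\bigr) = W - \bigl(\sum_i u_i + \sum_j v_j\bigr) = 0$, a sum of nonnegative terms, so every edge of $M$ is tight and thus $M \subseteq H$; taking the union over all such $M$ gives $E(G_w) \subseteq E(H)$. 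Consequently, if $N$ is any perfect matching with $E(N) \subseteq E(G_w) \subseteq E(H)$, then $w(N) = \sum_{(i,j) \in N}(u_i + v_j) = \sum_i u_i + \sum_j v_j = W$, so $N$ has minimum weight, proving the claim.

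The main obstacle is the claim; once it is in hand, the rest is bookkeeping about $\cF$-covered subgraphs. A purely combinatorial proof of the claim is also possible, via symmetric differences and alternating cycles, but the dual-potential argument is the cleanest and sidesteps case analysis.
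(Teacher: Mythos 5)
Your proof is correct and follows essentially the same route as the paper: both reduce the lemma to the key claim that every perfect matching contained in $G_w$ is itself a minimum weight perfect matching, and both establish that claim via the assignment LP. The only difference is presentational --- the paper directly cites the description of the optimal face of the LP (the points of the Birkhoff polytope supported on $E(G_w)$), while you derive the same fact from explicit dual potentials and complementary slackness, and you also spell out the routine bookkeeping about $\cPnn$-covered subgraphs that the paper leaves implicit.
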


    \begin{proof}
        This follows from the fact that if $M \subseteq G_w$ is any perfect matching, then $M \in \cP^w_{n, n}$.
        To see this note that the face of the optimum solutions of the LP
        \begin{mini*}
            {}
            {w(x)}
            {}
            {}
            \addConstraint{x(\delta(a))}{= 1}{\quad \forall a \in A}
            \addConstraint{x(\delta(b))}{= 1}{\quad \forall b \in B}
            \addConstraint{x_e}{\geq 0}{\quad \forall e \in E(K_{n, n})}
        \end{mini*}
        is given by the polytope
        \[
            Q \coloneqq \left\{x \in \bbR^{E(K_{n, n})} \;\middle|\; \begin{array}{lll}
                    x(\delta(a)) &= 1 & \forall a \in A, \\
                    x(\delta(b)) &= 1 & \forall b \in B, \\
                    x_e &= 0 & \forall e \notin E(G_w), \\
                    x_e &\geq 0 & \forall e \in E.
            \end{array}\right\}
        \]
        But any perfect matching $M \subseteq G_w$ defines a point in $Q$ which means that $w(M)$ is minimum.
        Thus $M \in \cP^w_{n, n}$.
    \end{proof}

    \begin{theorem}
    \label{thm:min_weight_bpm}
    
            	\[ f_{n,w} (x_{1,1}, \ldots , x_{n,n}) = \sum_{G \in \cC(\cPnn^w)} {  (-1)^{\chi(G)} \cdot m_G (x_{1,1}, \ldots , x_{n,n}) } \]

    \end{theorem}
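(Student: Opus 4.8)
The plan is to apply Fact~\ref{fact.last} to the family $\cF = \cPnn^w$ and then show that the Möbius numbers of the lattice $\cL(\cPnn^w)$ agree with those of $\cL(\cPnn)$, which reduces everything to the ingredients already used in the proof of Theorem~\ref{thm.main-BN}. Concretely, Fact~\ref{fact.last} gives
\[
  f_{n,w}(x_{1,1},\ldots,x_{n,n}) = \sum_{G \in \cC(\cPnn^w)} -\mu_w(\hat{0}, G)\cdot m_G(x_{1,1},\ldots,x_{n,n}),
\]
where $\mu_w$ denotes the Möbius function of $\cL(\cPnn^w)$. So it suffices to prove that $-\mu_w(\hat{0}, G) = (-1)^{\chi(G)}$ for every $G \in \cC(\cPnn^w)$.

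The crux is that the Möbius function is a local invariant: by the recursion in Definition~\ref{def.mobius}, $\mu_w(\hat{0}, G)$ depends only on the subposet $[\hat{0}, G]$ of $\cL(\cPnn^w)$. By Lemma~\ref{lem:downward_hull}, $\cL(\cPnn^w)$ is exactly the order ideal $[\hat{0}, G_w]$ inside $\cL(\cPnn)$; hence for any $G \leq G_w$, every $z \in \cL(\cPnn)$ with $z \leq G$ also satisfies $z \leq G_w$ and therefore lies in $\cL(\cPnn^w)$, so the interval $[\hat{0}, G]$ is the same poset whether formed inside $\cL(\cPnn^w)$ or inside $\cL(\cPnn)$. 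Consequently $\mu_w(\hat{0}, G) = \mu(\hat{0}, G)$, where $\mu$ is the Möbius function of $\cL(\cPnn)$. Now, since a union of minimum-weight perfect matchings is in particular a union of perfect matchings, we have $G \in \cC(\cPnn) \subseteq \cL(\cPnn)$, so Fact~\ref{fact.easy} gives $\mu(\hat{0}, G) = (-1)^{\rho(G)}$ and Fact~\ref{fact.rank} gives $\rho(G) = \chi(G) + 1$. Hence $-\mu_w(\hat{0}, G) = -(-1)^{\chi(G)+1} = (-1)^{\chi(G)}$, and substituting into the display above proves the theorem.

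The only step carrying genuine content is the locality reduction, and its correctness hinges entirely on Lemma~\ref{lem:downward_hull}: one must know that $\cL(\cPnn^w)$ embeds in $\cL(\cPnn)$ as a \emph{down-set} (indeed as the full interval $[\hat{0}, G_w]$), not merely as an arbitrary subposet, since a sublattice that is not an interval could have a different Möbius function. I would therefore make explicit that Lemma~\ref{lem:downward_hull} yields the equality $\cL(\cPnn^w) = \{G \in \cL(\cPnn) \mid G \leq G_w\}$ — the inclusion is stated there, and the reverse holds because any matching-covered $G \leq G_w$ is a union of perfect matchings each contained in $G_w$, hence (by the LP argument in that proof) each of minimum weight, so $G$ is $\cPnn^w$-covered. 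Once that equality is in hand, the remainder is a direct substitution chain through Facts~\ref{fact.last}, \ref{fact.easy}, and~\ref{fact.rank}, exactly mirroring how Theorem~\ref{thm.main-BN} was derived.
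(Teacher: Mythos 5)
Your proof is correct and follows essentially the same route as the paper: apply Fact~\ref{fact.last} to $\cF=\cPnn^w$, use Lemma~\ref{lem:downward_hull} to identify $\cL(\cPnn^w)$ with the interval $[\hat{0}, G_w]$ of $\cL(\cPnn)$ so that the two Möbius functions agree, and then substitute via Facts~\ref{fact.easy} and~\ref{fact.rank}. Your explicit justification of the locality step (that it is the down-set embedding, not merely a subposet embedding, which forces the Möbius numbers to coincide) and of the reverse inclusion behind Lemma~\ref{lem:downward_hull} fills in details the paper leaves implicit.
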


    \begin{proof}
        By Fact~\ref{fact.last} we have
        \[
            f_{n, w}(x) = \sum_{G \in \cC(\cP^w_{n, n})}{- \mu(\hat{0}, G) \cdot m_G(x)}.
        \]
        But since $\cL(\cP^w_{n, n})$ is just the downward hull of $G_w$ in $\cL(\cP_{n, n})$, we know that $\mu_{\cL(\cP^w_{n, n})} = \mu_{\cL(\cP_{n, n})}$.
        Thus
        \begin{align*}
            f_{n, w}(x) &= \sum_{G \in \cC(\cP^w_{n, n})}{- \mu_{\cL(\cP_{n, n})}(\hat{0}, G) \cdot m_G(x)} \\
                        &= \sum_{G \in \cC(\cP^w_{n, n})}{{(-1)}^{\rchi(G)} \cdot m_G(x)}. \qedhere
        \end{align*}
    \end{proof}

    \begin{remark}
    	\label{rem.gen}
    	As in \cite{Beniamini2020bipartite}, the coefficients of the monomials of this polynomial can also be efficiently computed. Analogous to Corollary \ref{cor.odd}, the number of $\cP^w_{n, n}$-covered subgraphs of $K_{n,n}$ is odd; this follows for the fact that $\cL(\cP_{n, n})$ is Eulerian. Finally, observe that Theorem \ref{thm:min_weight_bpm} is a generalization of Theorem \ref{thm.main-BN} since for the special case of unit weights on all edges of $K_{n,n}$, all perfect matchings have the same weight.
    \end{remark}

    \section{Discussion}
    \label{sec.other}

It is easy to check that most of the machinery of \cite{Beniamini2020bipartite} extends readily to non-bipartite graphs. However, their main theorem does not extend in any straightforward manner. Let $\cP_n$ denote the set of perfect matchings of $K_n$ and $\cL(\cP_n)$ denote the lattice of matching covered subgraphs of $K_n$. This lattice is not Eulerian, and not even graded. Therefore new structural properties are needed for obtaining an efficient algorithm for computing the associated Möbius numbers. The possibility that this function is NP-hard to compute is also not ruled out; however, it is important to point out that over the years, for numerous algorithmic results, the non-bipartite case has followed the bipartite case, with the infusion of appropriate structural facts, which are typically quite non-trivial; see Section 1.3 in \cite{Anari2020matching} for an extensive discussion of this phenomenon. 

More generally, it will be interesting to study the complexity of computing individual Möbius numbers of other lattices which have polynomial representations. An important candidate is the lattice of stable matchings for which a succinct representation follows from Birkhoff's representation theorem \cite{Birkhoff} and the notion of rotations, see \cite{GusfieldI}. 

 We note that the smallest value of $n$ for which $\cL(\cP_n)$ is not Eulerian is $n = 6$.
    In particular, for the graph $G$ (see Figure~\ref{fig:p6_not_eulerian_1}), the interval $[0, G]$ (see Figure~\ref{fig:p6_not_eulerian_2}) is not Eulerian.
    Note also that $\mu(0, G) = 0$ which means that the monomial $m_G(x)$ does not appear in the associated graph covering polynomial. Indeed, one can show that $\cL(\cP_6)$ is not even graded. We exhibit an explicit pentagon sublattice in Figure~\ref{fig:p6_not_graded}.

    \begin{figure}[htb]
        \begin{center}
            \begin{tikzpicture}[scale=2]
                \node[circle, draw, inner sep=2pt, thick] (v0) at (0, 0) {};
                \node[circle, draw, inner sep=2pt, thick] (v1) at (3, 0) {};
                \node[circle, draw, inner sep=2pt, thick] (v2) at (3, 2) {};
                \node[circle, draw, inner sep=2pt, thick] (v3) at (0, 2) {};
                \node[circle, draw, inner sep=2pt, thick] (v4) at (1, 1) {};
                \node[circle, draw, inner sep=2pt, thick] (v5) at (2, 1) {};

                \draw[-] (v0) -- (v1) -- (v2) -- (v3) -- (v0);
                \draw[-] (v0) -- (v4) -- (v3);
                \draw[-] (v1) -- (v5) -- (v2);
                \draw[-] (v4) -- (v5);
            \end{tikzpicture}
        \end{center}
        \caption{The graph $G \subseteq K_6$ referred to in Figure \ref{fig:p6_not_eulerian_2}.
        \label{fig:p6_not_eulerian_1}}
    \end{figure}
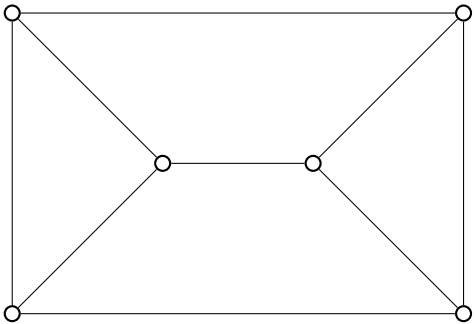

    \begin{figure}[htb]
        \begin{center}
            \begin{tikzpicture}[scale=1.5]
                \node[circle, draw, inner sep=2pt, thick] (b) at (0, 0) {};

                \node[circle, draw, inner sep=2pt, thick] (l11) at (-1.5, 1) {};
                \node[circle, draw, inner sep=2pt, thick] (l12) at (-0.5, 1) {};
                \node[circle, draw, inner sep=2pt, thick] (l13) at (0.5, 1) {};
                \node[circle, draw, inner sep=2pt, thick] (l14) at (1.5, 1) {};

                \node[circle, draw, inner sep=2pt, thick] (l21) at (-2.5, 2) {};
                \node[circle, draw, inner sep=2pt, thick] (l22) at (-1.5, 2) {};
                \node[circle, draw, inner sep=2pt, thick] (l23) at (-0.5, 2) {};
                \node[circle, draw, inner sep=2pt, thick] (l24) at (0.5, 2) {};
                \node[circle, draw, inner sep=2pt, thick] (l25) at (1.5, 2) {};
                \node[circle, draw, inner sep=2pt, thick] (l26) at (2.5, 2) {};

                \node[circle, draw, inner sep=2pt, thick] (l31) at (-1, 3) {};
                \node[circle, draw, inner sep=2pt, thick] (l32) at (0, 3) {};
                \node[circle, draw, inner sep=2pt, thick] (l33) at (1, 3) {};

                \node[circle, draw, inner sep=2pt, thick] (t) at (0, 4) {};

                \draw[thick, ->] (b) -- (l11);
                \draw[thick, ->] (b) -- (l12);
                \draw[thick, ->] (b) -- (l13);
                \draw[thick, ->] (b) -- (l14);

                \draw[thick, ->] (l11) -- (l22);
                \draw[thick, ->] (l11) -- (l24);
                \draw[thick, ->] (l11) -- (l25);

                \draw[thick, ->] (l12) -- (l23);
                \draw[thick, ->] (l12) -- (l24);
                \draw[thick, ->] (l12) -- (l26);

                \draw[thick, ->] (l13) -- (l21);
                \draw[thick, ->] (l13) -- (l25);
                \draw[thick, ->] (l13) -- (l26);

                \draw[thick, ->] (l14) -- (l21);
                \draw[thick, ->] (l14) -- (l22);
                \draw[thick, ->] (l14) -- (l23);

                \draw[thick, ->] (l21) -- (l31);

                \draw[thick, ->] (l22) -- (l31);
                \draw[thick, ->] (l22) -- (l32);

                \draw[thick, ->] (l23) -- (l32);

                \draw[thick, ->] (l24) -- (l32);
                \draw[thick, ->] (l24) -- (l33);

                \draw[thick, ->] (l25) -- (l31);
                \draw[thick, ->] (l25) -- (l33);

                \draw[thick, ->] (l26) -- (l33);

                \draw[thick, ->] (l31) -- (t);
                \draw[thick, ->] (l32) -- (t);
                \draw[thick, ->] (l33) -- (t);
            \end{tikzpicture}
        \end{center}
        \caption{Interval $[0, G] \subseteq \cL(\cP_6)$ for the graph of Figure~\ref{fig:p6_not_eulerian_1}, showing that $\cL(\cP_6)$ is not Eulerian. 
        \label{fig:p6_not_eulerian_2}}
    \end{figure}
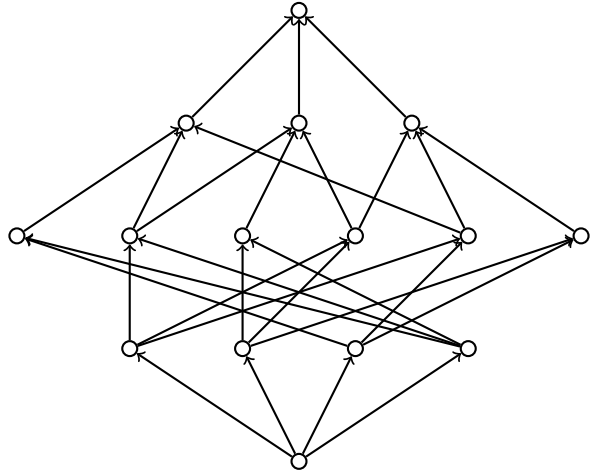


    \begin{figure}[htb]
        \begin{center}
            \begin{tikzpicture}[scale=1.5]
                \node[circle, draw, dotted, inner sep=1.3cm] (c1) at (0, 0) {};
                \node[circle, draw, dotted, inner sep=1.3cm] (c2) at (3, -3) {};
                \node[circle, draw, dotted, inner sep=1.3cm] (c3) at (3, -6) {};
                \node[circle, draw, dotted, inner sep=1.3cm] (c4) at (-3, -4.5) {};
                \node[circle, draw, dotted, inner sep=1.3cm] (c5) at (0, -9) {};

                \draw[thick, ->] (c5) -- (c3);
                \draw[thick, ->] (c5) -- (c4);
                \draw[thick, ->] (c3) -- (c2);
                \draw[thick, ->] (c4) -- (c1);
                \draw[thick, ->] (c2) -- (c1);

                \begin{scope}
                    \node[circle, draw, inner sep=2pt, thick] (v0) at (0:1) {};
                    \node[circle, draw, inner sep=2pt, thick] (v1) at (60:1) {};
                    \node[circle, draw, inner sep=2pt, thick] (v2) at (120:1) {};
                    \node[circle, draw, inner sep=2pt, thick] (v3) at (180:1) {};
                    \node[circle, draw, inner sep=2pt, thick] (v4) at (240:1) {};
                    \node[circle, draw, inner sep=2pt, thick] (v5) at (300:1) {};

                    \draw[-] (v0) -- (v1);
                    \draw[-] (v0) -- (v4);
                    \draw[-] (v0) -- (v5);
                    \draw[-] (v1) -- (v2);
                    \draw[-] (v1) -- (v3);
                    \draw[-] (v1) -- (v5);
                    \draw[-] (v2) -- (v3);
                    \draw[-] (v2) -- (v4);
                    \draw[-] (v2) -- (v5);
                    \draw[-] (v3) -- (v4);
                    \draw[-] (v4) -- (v5);
                \end{scope}

                \begin{scope}[shift={(3, -3)}]
                    \node[circle, draw, inner sep=2pt, thick] (v0) at (0:1) {};
                    \node[circle, draw, inner sep=2pt, thick] (v1) at (60:1) {};
                    \node[circle, draw, inner sep=2pt, thick] (v2) at (120:1) {};
                    \node[circle, draw, inner sep=2pt, thick] (v3) at (180:1) {};
                    \node[circle, draw, inner sep=2pt, thick] (v4) at (240:1) {};
                    \node[circle, draw, inner sep=2pt, thick] (v5) at (300:1) {};

                    \draw[-] (v0) -- (v1);
                    \draw[-] (v0) -- (v4);
                    \draw[-] (v0) -- (v5);
                    \draw[-] (v1) -- (v2);
                    \draw[-] (v1) -- (v3);
                    \draw[-] (v1) -- (v5);
                    \draw[-] (v2) -- (v3);
                    \draw[-] (v2) -- (v5);
                    \draw[-] (v3) -- (v4);
                    \draw[-] (v4) -- (v5);
                \end{scope}

                \begin{scope}[shift={(3, -6)}]
                    \node[circle, draw, inner sep=2pt, thick] (v0) at (0:1) {};
                    \node[circle, draw, inner sep=2pt, thick] (v1) at (60:1) {};
                    \node[circle, draw, inner sep=2pt, thick] (v2) at (120:1) {};
                    \node[circle, draw, inner sep=2pt, thick] (v3) at (180:1) {};
                    \node[circle, draw, inner sep=2pt, thick] (v4) at (240:1) {};
                    \node[circle, draw, inner sep=2pt, thick] (v5) at (300:1) {};

                    \draw[-] (v0) -- (v1);
                    \draw[-] (v0) -- (v4);
                    \draw[-] (v0) -- (v5);
                    \draw[-] (v1) -- (v2);
                    \draw[-] (v1) -- (v5);
                    \draw[-] (v2) -- (v3);
                    \draw[-] (v2) -- (v5);
                    \draw[-] (v3) -- (v4);
                    \draw[-] (v4) -- (v5);
                \end{scope}

                \begin{scope}[shift={(-3, -4.5)}]
                    \node[circle, draw, inner sep=2pt, thick] (v0) at (0:1) {};
                    \node[circle, draw, inner sep=2pt, thick] (v1) at (60:1) {};
                    \node[circle, draw, inner sep=2pt, thick] (v2) at (120:1) {};
                    \node[circle, draw, inner sep=2pt, thick] (v3) at (180:1) {};
                    \node[circle, draw, inner sep=2pt, thick] (v4) at (240:1) {};
                    \node[circle, draw, inner sep=2pt, thick] (v5) at (300:1) {};

                    \draw[-] (v0) -- (v1);
                    \draw[-] (v0) -- (v4);
                    \draw[-] (v0) -- (v5);
                    \draw[-] (v1) -- (v2);
                    \draw[-] (v1) -- (v3);
                    \draw[-] (v1) -- (v5);
                    \draw[-] (v2) -- (v3);
                    \draw[-] (v2) -- (v4);
                    \draw[-] (v3) -- (v4);
                    \draw[-] (v4) -- (v5);
                \end{scope}

                \begin{scope}[shift={(0, -9)}]
                    \node[circle, draw, inner sep=2pt, thick] (v0) at (0:1) {};
                    \node[circle, draw, inner sep=2pt, thick] (v1) at (60:1) {};
                    \node[circle, draw, inner sep=2pt, thick] (v2) at (120:1) {};
                    \node[circle, draw, inner sep=2pt, thick] (v3) at (180:1) {};
                    \node[circle, draw, inner sep=2pt, thick] (v4) at (240:1) {};
                    \node[circle, draw, inner sep=2pt, thick] (v5) at (300:1) {};

                    \draw[-] (v0) -- (v1);
                    \draw[-] (v0) -- (v4);
                    \draw[-] (v0) -- (v5);
                    \draw[-] (v1) -- (v2);
                    \draw[-] (v1) -- (v5);
                    \draw[-] (v2) -- (v3);
                    \draw[-] (v3) -- (v4);
                    \draw[-] (v4) -- (v5);
                \end{scope}
            \end{tikzpicture}
        \end{center}
        \caption{A pentagon sublattice in the lattice $\cL(\cP_6)$, showing that the latter is not graded.\label{fig:p6_not_graded}}
    \end{figure}
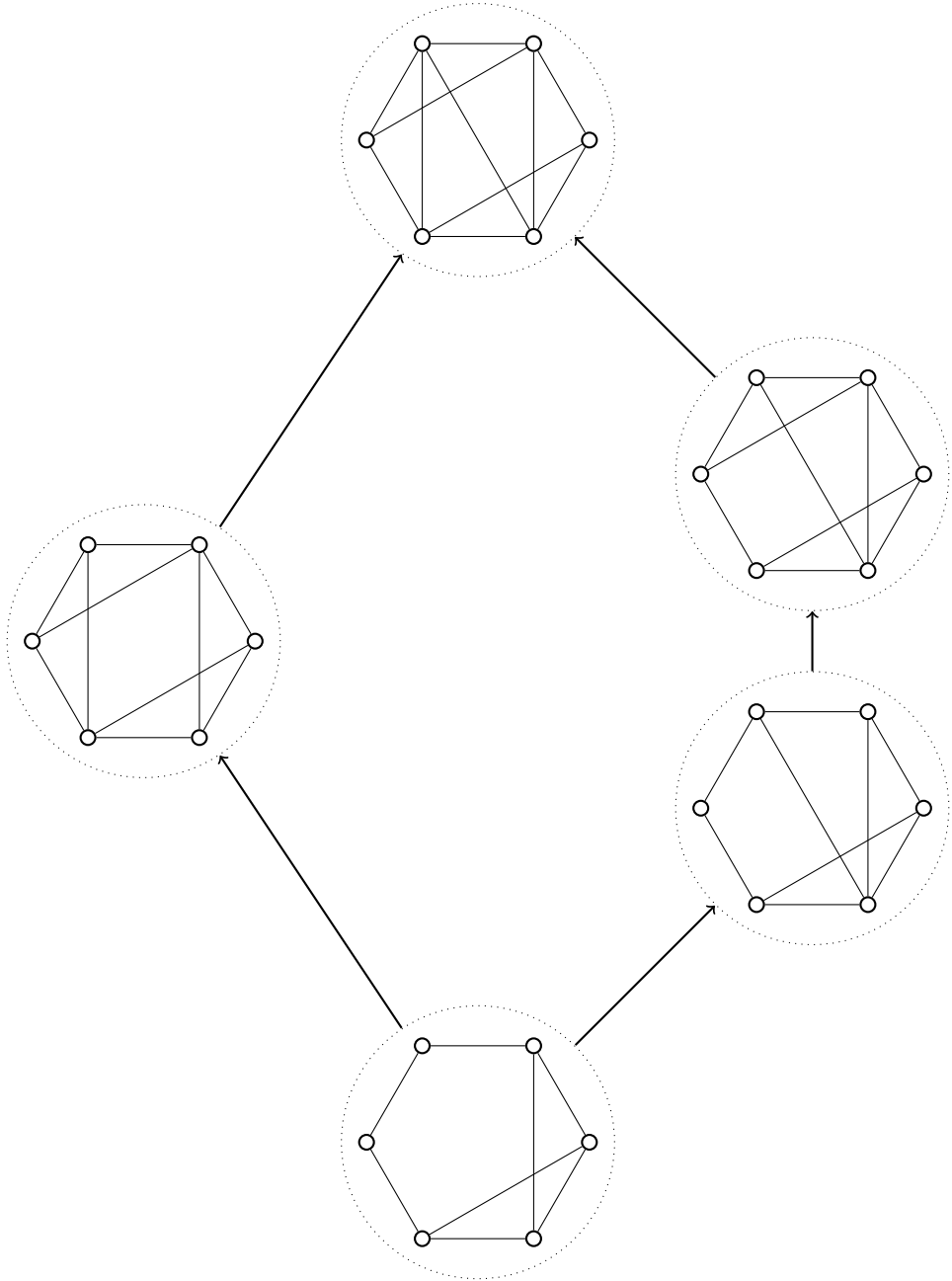
    
    Besides non-bipartite perfect matching, one can seek real polynomials for several other graph properties, including connected subgraphs; out-branchings from a given root vertex in a directed graph; and subgraphs containing an $s$-$t$ path, for given vertices $s$ and $t$.

    \bibliographystyle{alpha}
	\bibliography{refs}
\end{document}